%%%%%%%%%%%%%%%%%%%%%%%%%%%%%%%%%%%%%%%%%%%%%%%%%%%%%%%%%%%%%%%%%%%%%%%%%%%%%%%%
%2345678901234567890123456789012345678901234567890123456789012345678901234567890
%        1         2         3         4         5         6         7         8

\documentclass[letterpaper, 10 pt, conference]{ieeeconf}  % Comment this line out
%\usepackage[backend=bibtex,style=verbose-trad2]{biblatex}                                              % if you need a4paper
%\documentclass[a4paper, 10pt, conference]{ieeeconf}      % Use this line for a4
% \bibliography{cdc_ref}                                                          % paper

\IEEEoverridecommandlockouts                              % This command is only
                                                          % needed if you want to
                                                          % use the \thanks command
\overrideIEEEmargins
% See the \addtolength command later in the file to balance the column lengths
% on the last page of the document
\usepackage{tabularx} % extra features for tabular environment
\usepackage{graphicx} % takes care of graphic including machinery
\usepackage{cite} % takes care of citations
\usepackage[final]{hyperref} % adds hyper links inside the generated pdf file
\usepackage{amsmath} % assumes amsmath package installed
\usepackage{amssymb}  % assumes amsmath package installed
\usepackage{mathtools, cuted,bm}
\usepackage{caption}
\usepackage{subcaption}
\usepackage{esvect}
\usepackage{lipsum, color}

\newcommand{\K}{\mathbf{K}}
\newcommand{\W}{\mathbf{W}}
\newcommand{\N}{\mathbf{N}}

\newcommand{\V}{\mathbf{V}}

\newcommand{\smallmat}[1]{\left[ \begin{smallmatrix}#1 \end{smallmatrix} \right]}

\newcommand\scalemath[2]{\scalebox{#1}{\mbox{\ensuremath{\displaystyle #2}}}}

\hypersetup{
	colorlinks=true,       % false: boxed links; true: colored links
	linkcolor=black,        % color of internal links
	citecolor=black,        % color of links to bibliography
	filecolor=magenta,     % color of file links
	urlcolor=black         
}

\newtheorem{theorem}{Theorem}%[section]
\newtheorem{lemma}[theorem]{Lemma}

\newtheorem{problem}{Problem}
\newtheorem{proposition}{Proposition}
% The following packages can be found on http:\\www.ctan.org
%\usepackage{graphics} % for pdf, bitmapped graphics files
%\usepackage{epsfig} % for postscript graphics files
%\usepackage{mathptmx} % assumes new font selection scheme installed
%\usepackage{times} % assumes new font selection scheme installed
%\usepackage{amsmath} % assumes amsmath package installed
%\usepackage{amssymb}  % assumes amsmath package installed

\title{\LARGE \bf
Data-Driven Computation of Robust Invariant Sets and  Gain-Scheduled Controllers for Linear Parameter-Varying Systems
}

\author{Manas Mejari, Ankit Gupta, Dario Piga% <-this % stops a space
% <-this % stops a space
\thanks{This work has been accepted for publication in IEEE Control System Letters (LCSS), 2023. }
\thanks{M. Mejari and D. Piga are with IDSIA Dalle Molle
	Institute for Artificial Intelligence, Via la Santa 1, CH-6962 Lugano-Viganello, Switzerland. {\tt\small \{manas.mejari,dario.piga\}@supsi.ch}}
\thanks{A. Gupta is working with vehicle motion planning at Zenseact AB, Gothenburg, Sweden.
        {\tt\small ankit.gupta@zenseact.com}}%
}

\begin{document}
\newcounter{tempEquationCounter}
\newcounter{thisEquationNumber}
\newenvironment{floatEq}
{\setcounter{thisEquationNumber}{\value{equation}}\addtocounter{equation}{1}% record equation as happened and remember number
\begin{figure*}[!t]% float following equation across columns
\normalsize\setcounter{tempEquationCounter}{\value{equation}}% record current equation number in floated location
\setcounter{equation}{\value{thisEquationNumber}}% use previous equation number
}
{\setcounter{equation}{\value{tempEquationCounter}}% set back to equation number in floated location
\hrulefill\vspace*{4pt}% add a horizontal rule separator
\end{figure*}% end float environment

}
\newenvironment{floatEq2}
{\setcounter{thisEquationNumber}{\value{equation}}\addtocounter{equation}{1}% record equation as happened and remember number
\begin{figure*}[!t]% float following equation across columns
\normalsize\setcounter{tempEquationCounter}{\value{equation}}% record current equation number in floated location
\setcounter{equation}{\value{thisEquationNumber}}% use previous equation number
}
{\setcounter{equation}{\value{tempEquationCounter}}% set back to equation number in floated location
%\hrulefill\vspace*{4pt}% add a horizontal rule separator
\end{figure*}% end float environment

}

\maketitle
\thispagestyle{empty}
\pagestyle{empty}

%%%%%%%%%%%%%%%%%%%%%%%%%%%%%%%%%%%%%%%%%%%%%%%%%%%%%%%%%%%%%%%%%%%%%%%%%%%%%%%%
\begin{abstract}
We present a direct data-driven approach to synthesize \emph{robust control invariant} (RCI) sets and their associated gain-scheduled feedback control laws for \emph{linear parameter-varying} (LPV) systems subjected to bounded disturbances. A data-set consisting of a single state-input-scheduling trajectory is gathered from the system, which is directly utilized to compute polytopic RCI set and controllers by solving a  semi-definite program. 
The proposed method does not require an intermediate LPV model identification step. 
Through a numerical example, we show that the proposed approach can generate RCI sets with a relatively small number of data samples when the data satisfies certain excitation conditions. 
\end{abstract}

\section{Introduction}\label{sec:introduction}

A set is referred to as \emph{robust control invariant} (RCI) if, from every initial state within the set, there exists an admissible control input that maintains the state trajectories within the set despite any bounded disturbances acting on the system~\cite{fb}.
For linear parameter-varying (LPV) systems, several algorithms have been proposed to compute RCI set and its associated controllers,  see for \emph{e.g.},
\cite{ag20,Nguyen15, gmfp23auto}. These approaches are \emph{model-based}  in which an LPV model of the  system is assumed to be available. However, obtaining an LPV model of the underlying system poses  several challenges such as structure selection, estimation and validation~\cite{hou13,piga18}.
An inaccurate model can lead to the loss of the invariance property and the violation of constraints during closed-loop operations.

To mitigate the limitations of model-based methods, direct data-driven approaches have emerged as favorable alternatives  
which  circumvent the necessity for model identification.
The direct data-driven algorithm presented in~\cite{bisoffi23},
computes a state-feedback controller from open-loop data to
induce robust invariance in a polyhedral set. However,
this method requires that the set is  known a-priori. Recent contributions offer data-driven techniques that simultaneously compute RCI sets and associated invariance-inducing controllers~\cite{attar23, mejari23, zhong22}. These methods generate RCI sets having zonotopic~\cite{attar23}, polytopic~\cite{mejari23} and ellipsoidal~\cite{zhong22} set representations, however, they are limited to linear time-invariant (LTI) systems. 
In the context of LPV systems, several approaches have focused on direct data-driven algorithms to synthesize LPV controllers, see for \emph{e.g.}, 
LPV input-output controllers for constrained systems~\cite{piga18},  predictive controllers~\cite{verhoek21} and  gain-scheduled controllers~\cite{miller23,verhoek22}. 
In spite of the several methods proposed for control design, to the best of our knowledge, there is no contribution which addresses the problem of RCI set computation for constrained LPV systems in a  data-driven framework.  

In this regard, we develop a direct data-driven approach to compute polytopic RCI sets and gain-scheduled state-feedback controllers for unknown LPV systems subject to state, input, scheduling and disturbance constraints. We built upon the method recently developed by the authors in~\cite{mejari23}. The method  in~\cite{mejari23} is restricted to LTI models, which is extended in this work to LPV systems. A novel data-based condition for invariance is derived, which ensures invariance robustly for all scheduling trajectories within a given set and for all bounded disturbances acting on the system. In particular, a set of \emph{admissible LPV models} is identified, which is compatible with  measured data and disturbance bounds. The invariance is guaranteed robustly for all feasible models in this set. Furthermore, we employ a \emph{gain-scheduled}  controller which uses  available scheduling signal measurements. The set invariance is guaranteed robustly for all scheduling parameters lying in a given  set. Non-linearly between ``for all''  unknown plant models and ``for all'' scheduling parameters is resolved through vertex enumeration of scheduling parameter set and employing full block S-procedure.
Our approach does not require an LPV model of the system but only a single state-input-scheduling trajectory consisting of finite data samples. We show via a numerical example that if the gathered data satisfies certain excitation conditions, then RCI sets and invariance inducing gain-scheduled controllers can be synthesized with relatively  few number of data samples.

%Paper organization: The notation and preliminary results used in the paper are given in Section~\ref{sec:notation}. The problem of computing the RCI set and gain-scheduled controller is formalized in Section~\ref{sec:prob}. The proposed data-based conditions for invariance and constraints are described in Section~\ref{sec:LMI}. An iterative algorithm based on these conditions is given in Section~\ref{sec:SDP} to maximize the RCI set volume. Finally, in Section~\ref{sec:example}, the effectiveness of the proposed algorithm is demonstrated with a numerical example. 

\section{Notations and Preliminaries}\label{sec:notation}
We denote by  $\mathbb{I}_m^n \triangleq \{m,\ldots,n\}$, a set of natural numbers between two integers $m$ and $n$, $m\leq n$.
%$X \succ 0\,(\succeq 0)$ denotes a
%positive (semi) definite matrix $X$. 
In the paper $*$'s represent matrix entries uniquely identifiable from symmetry. Let $A\in \mathbb{R}^{m \times n}$ be a matrix written according to its $n$ coulumn vectors as  $A = \smallmat{a_1 \ \cdots \ a_n}$, we define vectorization of $A$ as $\Vec{A} \triangleq \smallmat{a^{\top}_{1} \ \cdots \ a^{\top}_{n}}^{\top} \in \mathbb{R}^{mn}$, stacking the columns of $A$. For a finite set $\Theta_v \triangleq \{\theta^{j} \in \mathbb{R}^{n}, j \in \mathbb{I}_{1}^{r}\}$, the convex-hull of $\Theta_v$  is given by,
$\mathsf{conv}(\Theta_v) \triangleq \left\{ \theta\in  \mathbb{R}^{n}: \theta = \sum_{j=1}^{r} \alpha_j \theta^{j}, \mathrm{s.t} \ \sum_{j=1}^{r} \alpha_j =1, \alpha_j \in [0,1]  \right\} $.
$A \otimes B$ denotes the Kronecker product between  $A, B$.

\begin{lemma}[Vectorization] \label{lemma:vectorization}
	For matrices $A \in \mathbb{R}^{k \times l}$, $B \in \mathbb{R}^{l \times m}$, $C \in \mathbb{R}^{m \times n}$ and $D \in \mathbb{R}^{k \times n}$,  the matrix equation $ABC = D$ is equivalent to (see,~\cite[Ex. $10.18$]{abadir05}),  
	\begin{align}
		&(C^{\top} \otimes A) \vv{B} = \vv{ABC} = \vv{D}  \label{eq:vectorize1}
	\end{align}
\end{lemma}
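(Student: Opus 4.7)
The plan is to reduce the equivalence to the single Kronecker identity $\vv{ABC} = (C^{\top}\otimes A)\vv{B}$, which is the only non-trivial content of the lemma. Since vectorization is a linear bijection $\mathbb{R}^{k\times n}\to\mathbb{R}^{kn}$, the matrix equation $ABC=D$ holds if and only if $\vv{ABC}=\vv{D}$; so once the Kronecker identity is established, the three expressions in \eqref{eq:vectorize1} are pairwise equivalent.

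To prove $\vv{ABC} = (C^{\top}\otimes A)\vv{B}$, I would exploit bilinearity to reduce to a rank-one check. First I would record the basic fact that for column vectors $u\in\mathbb{R}^{k}$ and $v\in\mathbb{R}^{m}$, one has $\vv{uv^{\top}} = v\otimes u$, which is immediate from stacking the columns $v_j u$ of the outer product. Next I would expand $B$ in the standard rank-one basis, $B=\sum_{p=1}^{l}\sum_{q=1}^{m} B_{pq}\,e_{p}^{(l)}\,(e_{q}^{(m)})^{\top}$, where $e_{i}^{(s)}$ denotes the $i$-th canonical basis vector of $\mathbb{R}^{s}$. By linearity of both $\vv{\cdot}$ and matrix multiplication, it suffices to verify the identity when $B=e_{p}^{(l)}(e_{q}^{(m)})^{\top}$.

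For such a rank-one $B$, a direct computation gives $ABC = (Ae_{p}^{(l)})(e_{q}^{(m)\top}C) = A_{\cdot p}\,(C^{\top}_{\cdot q})^{\top}$, so by the rank-one vectorization rule, $\vv{ABC} = C^{\top}_{\cdot q}\otimes A_{\cdot p}$. On the other side, $\vv{B}=\vv{e_{p}^{(l)}(e_{q}^{(m)})^{\top}} = e_{q}^{(m)}\otimes e_{p}^{(l)}$, and applying the mixed-product property of the Kronecker product yields $(C^{\top}\otimes A)(e_{q}^{(m)}\otimes e_{p}^{(l)}) = (C^{\top}e_{q}^{(m)})\otimes (Ae_{p}^{(l)}) = C^{\top}_{\cdot q}\otimes A_{\cdot p}$. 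The two sides agree, completing the rank-one case and hence the general identity by linearity.

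There is no real obstacle here; the only bookkeeping care needed is to track dimensions of the canonical vectors and to invoke the mixed-product rule $(X\otimes Y)(u\otimes v)=(Xu)\otimes(Yv)$ correctly. Since the result is classical, one could alternatively just cite~\cite[Ex.~10.18]{abadir05} as the statement itself does, but the rank-one reduction above gives a self-contained two-line justification should a proof be desired in the paper.
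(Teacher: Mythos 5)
Your proof is correct. Note, however, that the paper does not actually prove Lemma~\ref{lemma:vectorization} at all: it is stated as a known fact and discharged by the citation to \cite[Ex.~10.18]{abadir05}, so there is no in-paper argument to compare against. Your rank-one reduction is the standard self-contained derivation: the base fact $\vv{uv^{\top}} = v\otimes u$, the expansion of $B$ in the basis $e_{p}^{(l)}(e_{q}^{(m)})^{\top}$, and the mixed-product rule $(X\otimes Y)(u\otimes v)=(Xu)\otimes(Yv)$ are all applied correctly, the dimensions check out, and the equivalence of the three expressions in \eqref{eq:vectorize1} indeed follows from vectorization being a linear bijection. The only thing your write-up buys beyond the paper's citation is self-containedness; conversely, for a letter-length paper the citation is the appropriate choice, and your argument would belong in an appendix at most. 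One very minor presentational point: you call the final step a ``two-line justification,'' which undersells the bookkeeping you (correctly) carried out; if included, it should be written out as you have it rather than compressed further.
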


\section{Problem Formulation}\label{sec:prob}

\subsection{Data-generating LPV system and constraints}

The following LPV A-affine discrete-time data-generating system is considered
\begin{equation}\label{eq:system}
	x_{k+1}=\mathcal{A}(p_k)x_k\!+\!B_{o} u_k\!+\!w_k,
\end{equation}
where  $x_k \in \mathbb{R}^{n}$, $u_k \in \mathbb{R}^{m}$, $p_k \in \mathbb{R}^{s} $ and $w_k \in \mathbb{R}^{n}$ denotes the state, control input, scheduling parameter and the (additive) disturbance vectors, at time $k$, respectively.
The   function $\mathcal{A}(p_k)$ %and $\mathcal{B}(p_k)$ 
is assumed to have a linear dependency on the parameter $p_k$ as,
\begin{align}\label{eq:linear_dependence}
	\mathcal{A}(p_k)= \sum_{j=1}^{s} p_{k,j} A_{o}^j, 
\end{align}
where $p_{k,j}$ denotes the $j$-th element of $p_k \in \mathbb{R}^s$ and  $A_{o}^j\in \mathbb{R}^{n \times n}$. The system matrices  $\{A_{o}^j\}_{j=1}^{s}, B_o$ %and $B^j$ 
are \emph{unknown}. 
Using \eqref{eq:linear_dependence}, the LPV system~\eqref{eq:system} can be written as
\begin{align}\label{eq:true_mat}
	x_{k+1} = \underset{M_o}{\underbrace{\begin{bmatrix}
				A_o^1 \cdots&A_o^s&B_o
	\end{bmatrix}}} \begin{bmatrix}
		p_k \otimes x_k \\
		u_k
	\end{bmatrix} + w_k.
\end{align}
We assume that a data-set  $\{x_k, p_k, u_k\}_{k=1}^{T+1}$ of $T+1$ samples generated from System \eqref{eq:system} is available. The data are arranged in the following matrices,
\begin{subequations}\label{eq:data}
	\begin{align}
		X^{+} & \triangleq [x_2 \quad  x_3 \quad \cdots \quad x_{T+1}] \in \mathbb{R}^{n\times T}, \\
		{X}^{p}_{u} &\triangleq  \begin{bmatrix} p_1 \otimes x_1 \! & \! p_2 \otimes x_2 \!& \! \ldots & p_T \otimes x_T\\  u_1 \! & \! u_2 \!& \! \ldots \!& \! u_T 
		\end{bmatrix} \!\! \in \mathbb{R}^{(ns\!+\!m)\times T},
	\end{align}
\end{subequations}
We consider the following state, input, scheduling and disturbance constraints sets, respectively:
\begin{subequations}
	\begin{align}
		\mathcal{X} \triangleq&\left \{  x: H_xx\leq \mathbf{1}_{n_x} \right \}, \  
		\mathcal{U} \triangleq \left \{  u: H_uu\leq \mathbf{1}_{n_u} \right \}, \label{eq:constr}\\ 
		\mathcal{P} \triangleq &  \mathsf{conv}(\{ p^j \}, j \in \mathbb{I}_{1}^{v_p})\\ 
		\mathcal{W} \triangleq&\left \{  w: -\mathbf{1}_{n_w} \leq H_ww\leq \mathbf{1}_{n_w}\right \},\;\;\; \label{eq:disturbance}
	\end{align}
\end{subequations}
where $H_x, H_u, H_w$,  are given matrices and $\{ p^j \}, j \in \mathbb{I}_{1}^{v_p}$ are given vertices. The  states   are affected by \emph{unknown} disturbance samples $w_k$, however, they are assumed to lie in a bounded known disturbance set $ \mathcal{W}$ for $k \in \mathbb{I}_{1}^{T+1}$.

\subsection{Set of admissible LPV models} and `informative' data
Let us define a set of \emph{admissible model} matrices which satisfy the system equation~\eqref{eq:system} and are compatible with the disturbance 
set $\mathcal{W}$, as follows
\begin{equation}\label{eq:feasible model set}
	\begin{split}
		\mathcal{M}_{T} \triangleq & \left\{M:   
		x_{k+1} - M \begin{bmatrix}
			p_k \otimes x_k\\ 
			u_k
		\end{bmatrix} \in \mathcal{W},  k \in \mathbb{I}_{1}^{T}   \right \}. 
	\end{split}
\end{equation}
where $M= \smallmat{A^1, \ldots A^{s}, \  B} \in \mathbb{R}^{n \times (ns+m)}$. The matrix $M_o$ in \eqref{eq:true_mat} of the true data-generating system belongs to this set.
Using  \eqref{eq:data} and  \eqref{eq:disturbance}, the feasible model set $\mathcal{M}_{T} $ is represented as,
\begin{equation}\label{eq:feasible model set 1}
	%\begin{split}
	\mathcal{M}_{T}\triangleq \left \{M:   
	-\bar{\mathbf{1}} \leq H_w X^{+} - H_w M {X}^{p}_{u} \leq \bar{\mathbf{1}}  \right \},
	%\end{split}
\end{equation}
with $\bar{\mathbf{1}} \triangleq \smallmat{\mathbf{1}_{n_w} \ \mathbf{1}_{n_w} \cdots \ \mathbf{1}_{n_w}} \in \mathbb{R}^{n_w\times T}$.

We now rewrite the feasible model set $\mathcal{M}_{T}$ in \eqref{eq:feasible model set 1} using the vectorization Lemma~\ref{lemma:vectorization} for $\vv{M} \in \mathbb{R}^{n(ns+m)}$ as follows,
\begin{align}\label{eq:Feasible model set vector}
	\mathcal{M}_{T}
	&\triangleq \left\{\vv{M} : -\mathbf{1}_{Tn_w} \!+\!d  \leq  Z \vv{M} \! \leq \!\mathbf{1}_{Tn_{w}}\!+\!d \right\},
\end{align}
where we define $Z \in \mathbb{R}^{Tn_{w} \times (ns+m)n}$ and $d \in \mathbb{R}^{Tn_w} $ as
%\begin{subequations}\label{eq:Zd}
\begin{align}\label{eq:Zd}
	Z \triangleq \left( {{X}^{p}_{u}}^{\top} \otimes H_w \right),  \quad  
	d \triangleq \begin{bmatrix}
		H_wx_2 \\ \vdots \\ H_w x_{T+1}
	\end{bmatrix}
\end{align}

\begin{proposition}[Bounded admissible model set]\label{prop:rich_data}
	The admissible model set $\mathcal{M}_{T}$ in \eqref{eq:Feasible model set vector} is a bounded polyhedron if and only if $\mathrm{rank}\left( {{X}^{p}_{u}} \right) = (ns+m)$ and $H_w$ has a full column rank $n$, see~\cite[Fact 1]{bisoffi23}. 
\end{proposition}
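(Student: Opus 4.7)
The plan is to reduce the statement to a standard fact about two-sided polyhedra of the form $\{v:\,a\le Zv\le b\}$ and then combine it with the rank formula for Kronecker products.

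First I would observe that $\mathcal{M}_T$ is, by construction, a polyhedron: its defining inequalities in \eqref{eq:Feasible model set vector} are linear in $\vv{M}$. Moreover, $\mathcal{M}_T$ is nonempty because the true matrix $M_o$ of \eqref{eq:true_mat} lies in it (the samples $w_k=x_{k+1}-M_o\smallmat{p_k\otimes x_k\\u_k}$ belong to $\mathcal{W}$ by assumption). So the whole question is one of boundedness.

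Next I would use the following elementary fact cited in \cite[Fact~1]{bisoffi23}: a nonempty polyhedron $\{v:\,-\mathbf{1}+d\le Zv\le\mathbf{1}+d\}$ is bounded if and only if $Z$ has trivial kernel, i.e.\ $Z$ has full column rank. The forward direction is immediate: if $Zv_0=0$ with $v_0\neq 0$ and $\bar v\in\mathcal{M}_T$, then $\bar v+\lambda v_0\in\mathcal{M}_T$ for every $\lambda\in\mathbb{R}$, so $\mathcal{M}_T$ is unbounded. For the converse, if $\ker Z=\{0\}$, then $Z$ has a left inverse $Z^\dagger$ and any $v\in\mathcal{M}_T$ satisfies $v=Z^\dagger(Zv)$ with $Zv$ confined to a bounded box, so $\mathcal{M}_T$ is bounded.

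It then remains to translate ``$Z$ has full column rank $(ns+m)n$'' into the two rank conditions of the statement. Using the Kronecker-product identity $\mathrm{rank}(A\otimes B)=\mathrm{rank}(A)\,\mathrm{rank}(B)$ applied to $Z=({X^p_u}^{\!\top}\!\otimes H_w)\in\mathbb{R}^{Tn_w\times(ns+m)n}$, I obtain
\begin{equation*}
\mathrm{rank}(Z)=\mathrm{rank}({X^p_u}^{\!\top})\,\mathrm{rank}(H_w)=\mathrm{rank}({X^p_u})\,\mathrm{rank}(H_w).
\end{equation*}
Since $X^p_u\in\mathbb{R}^{(ns+m)\times T}$ and $H_w\in\mathbb{R}^{n_w\times n}$, the product attains the maximum $(ns+m)\cdot n$ if and only if $\mathrm{rank}(X^p_u)=ns+m$ and $\mathrm{rank}(H_w)=n$, which is precisely the claim.

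The only mildly delicate point I foresee is justifying the ``kernel trivial $\Leftrightarrow$ bounded'' equivalence for a \emph{two-sided} polyhedron (as opposed to a general $\{v:Zv\le b\}$, where unboundedness is governed by the recession cone rather than the kernel). In our case both bounds are finite, so the recession cone of $\mathcal{M}_T$ equals $\{v_0:Zv_0=0\}=\ker Z$, which is exactly the step handled above; this is why invoking \cite[Fact~1]{bisoffi23} is enough and no further argument is needed.
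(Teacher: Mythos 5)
Your proof is correct. The paper itself gives no argument for this proposition --- it simply points to Fact~1 of the cited reference --- and your argument (nonemptiness via $M_o$, boundedness of a two-sided polyhedron being equivalent to triviality of its recession cone $\ker Z$, and the identity $\mathrm{rank}(A\otimes B)=\mathrm{rank}(A)\,\mathrm{rank}(B)$ to split full column rank of $Z=\left( {X^{p}_{u}}^{\top}\otimes H_w\right)$ into the two stated rank conditions) is exactly the reasoning that citation supplies, adapted here to the Kronecker-structured LPV data matrix. Nothing is missing.
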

From the measured data, the full row rank of ${X}^{p}_{u}$ can be checked. We note that this also relates to the ``informative"  data  and \emph{persistency of excitation} condition for LPV systems~\cite[condition 1]{verhoek22}. If the rank conditions are not met, then for an unbounded set $\mathcal{M}_{T}$, it is challenging to find an RCI set and invariance inducing controller, which will work for all models $M \in \mathcal{M}_T$. 

\subsection{Gain-scheduled invariance inducing control}

We parameterize the controller  to be a state-feedback control law with a linear dependency on the scheduling parameters given as follows
\begin{equation}\label{eq:State Feedback}
	u_k =\mathcal{K}(p_k) x_k,
\end{equation}
where $\mathcal{K}(p_k) =\sum_{j=1}^{s} p_{k,j} \K^j$
with $\K^j \in\mathbb{R}^{m\times n}$ for $j \in \mathbb{I}^{s}_{1}$ are feedback gain matrices. Let $\K = \smallmat{\K^1, \cdots, \K^s } \in \mathbb{R}^{m \times ns}$ such that $u_k = \mathcal{K}(p_k)x_k = \K
\begin{bmatrix}
	p_k \otimes x_k
\end{bmatrix}
$.

From \eqref{eq:feasible model set} and \eqref{eq:State Feedback}, for an admissible model $M \in \mathcal{M}_{T}$,  the  closed-loop dynamics  is given by
\begin{equation}\label{eq:Controlled System}
	x^+ = M \begin{bmatrix}
		I_{ns} \\ 
		\K 
	\end{bmatrix} \begin{bmatrix}
		p \otimes x
	\end{bmatrix} +w,
\end{equation}
where the $k$ dependence is omitted and  $x^+$ denotes the successor state.
We now consider an RCI set $\mathcal{S}$ parameterized as a 0-symmetric polytope\footnote{In the definition of $\mathcal{S}$, we have assumed that $\W$ is invertible, which would be later guaranteed by the LMI conditions for invariance.} defined as follows \vspace{-0.2cm}
\begin{equation}\label{eq:invset1}
	\mathcal{S} \triangleq\left \{ x\in\mathbb{R}^{n} : -\mathbf{1}_{n_c}\leq C\W^{-1}x \leq \mathbf{1}_{n_c} \right \},
\end{equation}
where $C \in \mathbb{R}^{n_c\times n}$ is a fixed  matrix, $\W \in \mathbb{R}^{n\times n}$ is an unknown matrix to be computed. The matrix $C$ is chosen in such a way that $\mathcal{S}$ will be a full dimensional polytope in $\mathbb{R}^n$, the choice of $C$ is discussed in details in \cite[Section 5]{ag20}.

A set~$\mathcal{S}$ is referred to as \emph{robustly invariant} for the system~\eqref{eq:Controlled System}, if for any given $p \in \mathcal{P}$, the following condition is satisfied
\begin{equation}\label{eq:Invariance Condition data based}
	x\in\mathcal{S}  \;\Rightarrow\; x^+\in\mathcal{S},\; \forall M \in \mathcal{M}_{T}, \forall w \in \mathcal{W}.
\end{equation}

By definition the RCI set $\mathcal{S}$ also satisfies the state and input constraints, $\mathcal{S}\subseteq\mathcal{X}$ and $\mathcal{K}\mathcal{S}\subseteq\mathcal{U}$,   given as follows\vspace{-0.2cm}
\begin{align}\label{eq:State Constraint}
	x\in\mathcal{S} \;&\Rightarrow\; x\in\mathcal{X},\\\label{eq:Control Input Constraint} 
	x\in\mathcal{S} \;&\Rightarrow\; u=\mathbf{K}\begin{bmatrix}
		p \otimes x
	\end{bmatrix} \in\mathcal{U} \quad \forall p \in \mathcal{P}
\end{align}

We now formalize the problem considered in this letter:

\begin{problem}\label{prob:Problem Formulation}
	Given the measured state-input-scheduling data $\{x_k, p_k, u_k\}_{k=1}^{T+1}$, the  constraints sets \eqref{eq:constr} and a user-chosen matrix $C$, compute the matrix $\W$ parameterizing the  set  $\mathcal{S}$  in \eqref{eq:invset1} and  feedback controller gains $\K$ such that: 
	%\begin{enumerate}
	$1)$ The invariance condition \eqref{eq:Invariance Condition data based} is satisfied;
	$2)$ The set $\mathcal{S}$  satisfies the state and input constraints \eqref{eq:State Constraint} and \eqref{eq:Control Input Constraint}, respectively. We also aim at maximizing the size of the RCI set $\mathcal{S}$.
	%\end{enumerate}
\end{problem}\vspace{0cm}

\section{Data-based sufficient LMI conditions for System Constraints and Invariance}\label{sec:LMI}

In order to obtain tractable conditions for the system constraints and the invariance, we define an appropriate coordinate transformation of the state as follows~\cite{ag20}: 
%In this section, we present a convenient coordinate transformation~\cite{ag20} such that state and control input constraints \eqref{eq:State Constraint}-\eqref{eq:Control Input Constraint} are expressed as affine inequalities, while the invariance condition  \eqref{eq:Invariance Condition data based}  is expressed as a set of LMIs
%We define the following state transformation 
\vspace{-0.1cm}\begin{equation}\label{eq:trans}
	\theta = \W^{-1}x \Leftrightarrow x=\W\theta.\vspace{-0.1cm}
\end{equation}
Based on  this transformation, the set $\mathcal{S}$ in \eqref{eq:invset1} is given as
\vspace{-0.1cm}\begin{equation}
	\label{eq:ninvset}
	\mathcal{S} \triangleq \left \{ \W \theta \in \mathbb{R}^n:\theta \in \Theta \right \},\vspace{-0.1cm}
\end{equation}
where $\Theta$ is a symmetric set defined in the $\theta$-state-space:
\vspace{-0.1cm}\begin{equation}
	\label{eq:Theta}
	\Theta\triangleq\left \{ \theta\in\mathbb{R}^{n}: -\mathbf{1}_{n_c} \leq C\theta \leq \mathbf{1}_{n_c} \right \}.\vspace{-0.1cm}
\end{equation}
Note that, the candidate invariant set $\Theta$ is a \emph{known} 0-symmetric set. %The corresponding polytopic set $\mathcal{S}$ in the $x$-state-space will be completely determined by the choice of $\W$, which we aim to compute.
The choice of $\W$ is the key factor that wholly determines the corresponding polytopic set $\mathcal{S}$ in the original $x$-state-space, and it is our objective to compute $\W$.
The vertices $\left\{\theta^1,\ldots,\theta^{2\sigma }\right\}$ of  $\Theta$ are known for a fixed $C$, and the set  can be  represented as a convex hull of its vertices:
\begin{equation}
	\label{eq:Theta - Convex Hull}
	\Theta=\mathsf{conv}\left(\left\{\theta^1,\ldots,\theta^{2\sigma }\right\}\right),
\end{equation}
where $\sigma >0 $ is a known  integer determined by   $C$.
\subsection{System constraints}
Based on the transformation defined in \eqref{eq:trans}, the state and input inequality constraints \eqref{eq:constr} can be expressed in the $\theta$-state-space. 
When the  constraints are met at all the vertices $\{\theta^i\}_{i=1}^{2\sigma}$, of the convex set $\Theta$, it guarantees their fulfillment for the entire set $\Theta$ due to the convexity property.
The state constraints \eqref{eq:State Constraint} in terms of $\W$ at the vertices of $\Theta$  are given as follows:
\vspace{-0.1cm}\begin{equation}\label{eq:Tractable State Constraints}
	H_x\W\theta\leq \mathbf{1}_{n_x},\forall\theta\in\Theta \;\Leftrightarrow\; H_x\W\theta^i\leq \mathbf{1}_{n_x},\;i\!\in \mathbb{I}_{1}^{2\sigma} \vspace{-0.1cm}
\end{equation}
In order to express the control input constraints in terms of $\W$, we introduce the matrix variables $\N^{l} \in \mathbb{R}^{m \times n} $ for $l \in \mathbb{I}_{1}^{s}$ as follows:
\begin{equation}
	\label{eq:N}
	\N \triangleq \begin{bmatrix}
		\N^1 & \cdots & \N^s
	\end{bmatrix} = 
	\begin{bmatrix}
		\K^1 \W & \cdots & \K^s \W
	\end{bmatrix}
	\in \mathbb{R}^{m \times ns} %\;\;\;\mathrm{with} \; \K^i = \N^i \W^{-1} \ i \in \mathbb{I}_{1}^{s}\vspace{-0.1cm}
\end{equation}
with $\K^l = \N^l \W^{-1} $.
The control input constraints in \eqref{eq:Control Input Constraint} are then given as 
\begin{align}\label{eq:Control Input Constraints}
	&H_u \K \begin{bmatrix}
		p_k \otimes \W \theta
	\end{bmatrix}   \leq \mathbf{1}_{n_u},
\end{align}
where, 
\begin{align}\label{eq:K_N_identity}
	\K \begin{bmatrix}
		p_k \otimes \W \theta
	\end{bmatrix}  &=
	\begin{bmatrix}
		\K^1\W p_{k,1}+\cdots+\K^s \W p_{k,s} 
	\end{bmatrix}\theta \nonumber \\  &=  \underset{\N}{\underbrace{\begin{bmatrix}
				\K^1\W & \cdots & \K^s \W  
	\end{bmatrix}}} \begin{bmatrix}
		p_k \otimes  \theta
	\end{bmatrix}
\end{align}
The input constraints \eqref{eq:Control Input Constraints} can be written as
\begin{align}
	&H_u \N \begin{bmatrix}
		p \otimes  \theta
	\end{bmatrix}\leq \mathbf{1}_{n_u},\forall (\theta,p) \in (\Theta,\mathcal{P}) \;\Leftrightarrow\; \label{eq:Tractable Control Input Constraints0} \\ &H_u \N \begin{bmatrix}
		p^j \otimes  \theta^i
	\end{bmatrix}\leq \mathbf{1}_{n_u},\; i \in \mathbb{I}_{1}^{2\sigma}, j \in \mathbb{I}_{1}^{v_p} \label{eq:Tractable Control Input Constraints}
\end{align}
Note that   \eqref{eq:Tractable Control Input Constraints0} and \eqref{eq:Tractable Control Input Constraints} are equivalent as  the Kronecker product map is linear in each of its  arguments $(p, \theta)$  and the sets $\mathcal{P}, \Theta$ and $\mathcal{U}$ are convex.

\subsection{Invariance condition}
We now write the system dynamics  in the $\theta$-state-space for an admissible LPV model $M \in \mathcal{M}_{T}$ and for all $w \in \mathcal{W}$, $p \in \mathcal{P}$. By using \eqref{eq:trans} and \eqref{eq:K_N_identity},  the closed-loop dynamics \eqref{eq:Controlled System} can be written as 
\begin{align}\label{eq:closeloop}
	\W \theta^{+}&=M \begin{bmatrix}
		p \otimes \W \theta \\ 
		\K \begin{bmatrix}
			p \otimes  \W \theta
		\end{bmatrix}
	\end{bmatrix}  +w =M \begin{bmatrix}
		I_s p \otimes \W \theta \\ 
		\N \begin{bmatrix}
			p \otimes  \theta
		\end{bmatrix}
	\end{bmatrix}  +w, \nonumber \\
	&=M \begin{bmatrix}
		\bar{\W} \\ 
		\N 
	\end{bmatrix} \begin{bmatrix}
		p \otimes  \theta
	\end{bmatrix} +w, 
\end{align}
where $\bar{\W} \triangleq I_s \otimes \W$ and the last equality follows from the mixed-product property of the Kronekar product. 
Based on the closed-loop dynamics \eqref{eq:closeloop}, we will now state and prove two equivalent invariance conditions in the $\theta$ state-space.  

\begin{lemma}\label{lem:equivalence}
	If the set $\Theta$ in \eqref{eq:Theta} is robustly invariant for system \eqref{eq:closeloop} then the following two statements are equivalent:
	\begin{enumerate}
		\item[$(i)$]  for all $\theta \in \Theta$, for any given  $p \in \mathcal{P}$, $\forall (w, M) \in (\mathcal{W}, \mathcal{M}_{T})$,
		\begin{equation}\label{eq:inv_cond_all}
			\theta^{+} = \left( \W^{-1}  M\begin{bmatrix}
				\bar{\W} \\ 
				\N 
			\end{bmatrix}\begin{bmatrix}
				p \otimes  \theta
			\end{bmatrix}  + \W^{-1}w \right) \in \Theta
		\end{equation}
		\item[$(ii)$] for each vertex $\theta^{i}, \ i \in \mathbb{I}_{1}^{2\sigma}$ of the set $\Theta$,  and for each vertex $p^{j}, \ j  \in \mathbb{I}_{1}^{v_p}$ of the set $\mathcal{P}$,  $\forall(w,  M) \in (\mathcal{W}, \mathcal{M}_{T})$,  
		\begin{equation}\label{eq:inv_cond_vertices}
			\!\!{\theta^{i,j}}^{+} \!\!=\!\! \left(\! \W^{-1}  M\begin{bmatrix}
				\bar{\W} \\ 
				\N 
			\end{bmatrix} \begin{bmatrix}
				p^j \otimes  \theta^i
			\end{bmatrix}  \!+\! \W^{-1}w \!\right) \in \Theta
		\end{equation}
	\end{enumerate}
\end{lemma}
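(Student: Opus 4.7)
The plan is to prove the two implications separately, exploiting the convexity of $\Theta$ and $\mathcal{P}$ together with the bilinearity of the Kronecker product.

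The direction $(i) \Rightarrow (ii)$ is immediate, since each vertex $\theta^{i}$ belongs to $\Theta$ and each $p^{j}$ belongs to $\mathcal{P}$, so \eqref{eq:inv_cond_vertices} is a particular instance of \eqref{eq:inv_cond_all}. I would dispatch this in a single sentence.

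For the nontrivial direction $(ii) \Rightarrow (i)$, I would fix an arbitrary $\theta \in \Theta$ and $p \in \mathcal{P}$ together with arbitrary $w \in \mathcal{W}$ and $M \in \mathcal{M}_{T}$, and write them as convex combinations of vertices,
\begin{equation*}
\theta = \sum_{i=1}^{2\sigma} \alpha_i \theta^{i}, \qquad p = \sum_{j=1}^{v_p} \beta_j p^{j},
\end{equation*}
with $\alpha_i,\beta_j \geq 0$ and $\sum_i \alpha_i = \sum_j \beta_j = 1$, as guaranteed by \eqref{eq:Theta - Convex Hull} and the definition of $\mathcal{P}$. The first key step is to invoke bilinearity of the Kronecker product to obtain
\begin{equation*}
p \otimes \theta = \sum_{i=1}^{2\sigma}\sum_{j=1}^{v_p} \alpha_i \beta_j \,(p^{j} \otimes \theta^{i}),
\end{equation*}
noting that the combined weights $\alpha_i \beta_j$ are nonnegative and sum to one. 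Substituting this into \eqref{eq:inv_cond_all} and pulling the constant linear map $\W^{-1} M \left[\begin{smallmatrix}\bar{\W}\\\N\end{smallmatrix}\right]$ through the sum yields the image of $p \otimes \theta$ as the same convex combination of the images of $p^{j} \otimes \theta^{i}$.

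The small wrinkle to handle carefully is the additive disturbance term $\W^{-1} w$, which does not share the $(i,j)$ index. I would absorb it by using $\sum_{i,j}\alpha_i\beta_j = 1$ to write $\W^{-1} w = \sum_{i,j}\alpha_i\beta_j \W^{-1} w$, so that the entire successor can be expressed as
\begin{equation*}
\theta^{+} = \sum_{i=1}^{2\sigma}\sum_{j=1}^{v_p} \alpha_i \beta_j \, {\theta^{i,j}}^{+},
\end{equation*}
with the same $w$ and $M$ appearing in each ${\theta^{i,j}}^{+}$. Since each ${\theta^{i,j}}^{+} \in \Theta$ by hypothesis $(ii)$ and $\Theta$ is convex (being a polytope defined by symmetric linear inequalities on $C\theta$), the convex combination $\theta^{+}$ also lies in $\Theta$, yielding \eqref{eq:inv_cond_all}. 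The main obstacle is purely notational, namely keeping the four universal quantifiers ($\theta$, $p$, $w$, $M$) aligned so that one checks that the disturbance and the admissible model are identical across the vertex decompositions, rather than being chosen vertex-by-vertex; once this is made explicit, convexity of $\Theta$ closes the argument.
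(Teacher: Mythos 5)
Your proposal is correct and follows essentially the same route as the paper's proof: the trivial direction via vertex membership, and the converse via convex-combination decompositions of $\theta$ and $p$, bilinearity of the Kronecker product, and convexity of $\Theta$ applied to the resulting combination of the ${\theta^{i,j}}^{+}$ (the paper applies convexity in two nested steps over $i$ then $j$, whereas you do it in one double sum, which is equivalent). Your explicit handling of the disturbance term via $\sum_{i,j}\alpha_i\beta_j = 1$ is a point the paper leaves implicit, and is a welcome clarification.
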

\vspace{0.1cm}
%\begin{proof}
See Appendix~\ref{sec:appendix} for the  proof of Lemma~\ref{lem:equivalence}. In the rest of the paper, we will consider  condition \eqref{eq:inv_cond_vertices} for ensuring robust invariance of the set $\Theta$.

\subsection{LMI Condition for Data-Driven Invariance}
We will now present a data-based sufficient LMI condition  to compute $\W$ and  an associated LPV state-feedback controller such that 
the set $\Theta$ is rendered invariant.  
The  closed-loop dynamics \eqref{eq:inv_cond_vertices} at the vertices $\theta^{i}, p^j$ can be written as,
\begin{align}\label{eq:G_dynamics}
	\W {\theta^{i,j}}^{+} =   \underbrace{\left( \left( \begin{bmatrix}
			\bar{\W} \\ 
			\N 
		\end{bmatrix} \begin{bmatrix}
			p^j \otimes  \theta^i
		\end{bmatrix}\right)^{\top} \otimes I_n \right)}_{\mathcal{G}\left(\W, \N, p^j,  \theta^{i}\right)}  \vv{M} + w.    
\end{align}
where we have used  the vectorization identity in \eqref{eq:vectorize1}.
In order to obtain less conservative LMI conditions, we  introduce new matrix variables $\V_{ijk} \in \mathbb{R}^{n \times n}$ and signals $\xi_{ijk} = \V_{ijk}^{-1} \W {\theta^{i,j}}^{+}$, for $k \in \mathbb{I}_{1}^{n_c}$, $i \in \mathbb{I}_{1}^{2\sigma}$, $j \in \mathbb{I}_{1}^{v_p}$, and express the dynamics~\eqref{eq:G_dynamics} as follows,
\begin{equation}\label{eq:new_variables}
	\mathcal{G}\left(\W, \N, p^j, \theta^{i} \right)\vv{M} +w - \V_{ijk} \xi_{ijk} = 0.
\end{equation}
From the set definition in \eqref{eq:Theta} the invariance condition   \eqref{eq:inv_cond_vertices} is given as, 
for all $k \in \mathbb{I}_{1}^{n_c}, \; i \in \mathbb{I}_{1}^{2\sigma} \; j \in \mathbb{I}_{1}^{v_p} $, 
\begin{equation}\label{LMI Starting Point}
	1 - (e_k^{\top}C{\theta^{i,j}}^{+})^{2}\geq 0,\; \forall w \in \mathcal{W}, \; \forall \vv{M} \in \mathcal{M}_{T},
\end{equation} 
where  $e_k$ is the $k$-th column vector of the identity matrix $I_{n_c}$.
By substituting~\eqref{eq:new_variables} in \eqref{LMI Starting Point} we obtain,
\begin{equation}\label{LMI Starting Point new}
	1 \!-\! (e_k^{\top}C\W^{-1}\V_{ijk}\xi_{ijk})^{2}\geq 0,\; \forall w \in \mathcal{W}, \; \forall \vv{M} \in \mathcal{M}_{T},
\end{equation}
Following the S-procedure~\cite{cs97},  \eqref{LMI Starting Point new} is multiplied 
by a positive scalar variable $\bm{\phi}_{ijk} >0$ and  the left hand side is lower bounded by a term that is guaranteed to be non-negative for all $w \in \mathcal{W}, \; \vv{M} \in \mathcal{M}_{T}$  as follows,
\begin{multline} \label{eq:relax_ineq_dilated}
	\bm{\phi}_{ijk}(1 - (e_k^{\top}C\W^{-1} \V_{ijk}\xi_{ijk})^2) \geq \\
	2\xi_{ijk}^{\top}\!\underbrace{\left(\mathcal{G}(\W, \N, p^j,\theta^{i})\vv{M}+w- \V_{ijk} \xi_{ijk} \right)}_{0} \\ + 
	\underbrace{\left( (\mathbf{1}+d)-Z\vv{M} \right)^{\top}\bm{\Lambda}_{ijk}\left((\mathbf{1}-d)+Z\vv{M}\right)}_{\geq0} \\
	+\underbrace{(\mathbf{1}+H_w w)^{\top}\bm{\Gamma}_{ijk}(\mathbf{1}-H_w w)}_{\geq0},%\vspace{-0.2cm}
\end{multline}
with 
$\bm{\Lambda}_{ijk} \in \mathbb{D}^{Tn_w}_{+},  \bm{\Gamma}_{ijk} \in \mathbb{D}^{n_w}_{+}$, being diagonal matrix variables having non-negative entries.
Note that the right hand side of \eqref{eq:relax_ineq_dilated} is non-negative, which can be verified based on \eqref{eq:G_dynamics} and the set definitions  $\mathcal{W}$, $\mathcal{M}_{T}$ in \eqref{eq:disturbance}, \eqref{eq:Feasible model set vector} respectively.
We rewrite \eqref{eq:relax_ineq_dilated} into the following quadratic form:
\begin{equation}\label{eq:quad_form}
	\varkappa^{\top}\mathcal{P}_{ijk}(\W,\N,\bm{\Lambda}_{ijk},\bm{\Gamma}_{ijk},\bm{\phi}_{ijk}, \V_{ijk})\varkappa\succcurlyeq 0,\;\forall \varkappa, 
\end{equation}
where $\varkappa^{\top}=\begin{bmatrix}
	1 &\vv{M}^{\top} & w^{\top} &-\xi_{ijk}^{\top}
\end{bmatrix}$ and $\mathcal{P}_{ijk}$ is a symmetric matrix. Thus, a sufficient  invariance condition is the LMI $\mathcal{P}_{ijk}\succcurlyeq0$, \emph{i.e.},
\begin{equation}\label{eq:LMI invariance condition dilated}
	\scalemath{0.85}{
		\begin{bmatrix}
			\bm{r}_{ijk} & -d^{\top}\bm{\Lambda}_{ijk}Z & \bm{0}  & \bm{0}\\ 
			* &  Z^{\top}\bm{\Lambda}_{ijk}Z   &\bm{0} & \mathcal{G}^{\top}\left(\W, \N, p^j, \theta^{i}\right)\\ 
			* & * & H^{\top}_w \bm{\Gamma}_{ijk} H_w \!& I_n\\
			*&*&*&\V_{ijk}\!+\!\V_{ijk}^{\top}\!-\!\V_{ijk}^{\top}\mathcal{L}_{ijk} \V_{ijk}
		\end{bmatrix} \succcurlyeq 0,}
\end{equation} 
where $\mathcal{L}_{ijk} \triangleq \bm{\phi}_{ijk} \W^{-\top}C^{\top}e_{k}e_{k}^{\top}C \W^{-1}$ and  $\bm{r}_{ijk} \in  \mathbb{R}$, $\mathcal{G}(\W, \N, p^j, \theta^{i}) \in \mathbb{R}^{n \times (ns+m)n}$ are  defined as follows:
\begin{subequations}
	\begin{align}
		&\bm{r}_{ijk} \triangleq \bm{\phi}_{ijk}\!-\!\mathbf{1}^{\top}\!\bm{\Lambda}_{ijk}\mathbf{1}-\!\mathbf{1}_{n_w}^{\top}\!\bm{\Gamma}_{ijk}\mathbf{1}_{n_w} +d^{\top}\bm{\Lambda}_{ijk}d, \label{eq:rij} \\
		&\mathcal{G}\left(\W, \N,  p^j, \theta^{i} \right) \triangleq  \left( \begin{bmatrix}
			\bar{\W} \\ 
			\N 
		\end{bmatrix} \begin{bmatrix}
			p^j \otimes  \theta^i
		\end{bmatrix}\right)^{\top} \otimes I_n, \label{eq:G}
	\end{align}   
\end{subequations}
Note that the block $(4,4)$ in \eqref{eq:LMI invariance condition dilated} exhibits  nonlinear dependence on the variables $ \bm{\phi}_{ijk}, \V_{ijk} $ and $\W$. To resolve this non-linearity we will introduce new matrix variables. 
%Next, we present the dilated sufficient LMI conditions for invariance. 

\begin{theorem}[Data-based LMI conditions for invariance]\label{thm:data-based invariance dilated}
	Given available data $(X^+, {X}^{p}_{u})$ and a user-specified fixed matrix $C \in \mathbb{R}^{n_c \times n}$, if there exists variables $\W \in \mathbb{R}^{n \times n}$, $\N \in \mathbb{R}^{m \times ns}$, and   
	$\{ \bm{\phi}_{ijk} \in \mathbb{R}_{+},  \bm{\Lambda}_{ijk} \in \mathbb{D}^{Tn_w}_{+}, \bm{\Gamma}_{ijk} \in \mathbb{D}^{n_w}_{+}, \bm{X}_{ijk}, \V_{ijk} \in \mathbb{R}^{n \times n}\}$ that satisfy the following LMIs for $k \in \mathbb{I}_{1}^{n_c}$, $i \in \mathbb{I}_{1}^{2\sigma}$ and $j \in \mathbb{I}_{1}^{v_p}$,
	\begin{equation}\label{eq:Dilated LMI 1}
		\begin{bmatrix}
			\W^{\top} + \W - \bm{X}_{ijk} & \bm{\phi}_{ijk} C^{\top}e_{k} \\
			\bm{\phi}_{ijk} e_{k}^{\!\top}C   & \bm{\phi}_{ijk}
		\end{bmatrix} {\succcurlyeq} 0.
	\end{equation}
	\begin{equation}\label{eq:Dilated LMI 2}
		\scalemath{0.85}{
			\begin{bmatrix}
				\bm{r}_{ijk} & -d^{\top}\bm{\Lambda}_{ijk}Z & \bm{0}  & \bm{0} & \bm{0}\\ 
				* &  Z^{\top}\bm{\Lambda}_{ijk}Z   &\bm{0} & \mathcal{G}^{\top}\left(\W, \N, p^j, \theta^{i}\right) & \bm{0} \\ 
				* & * & H^{\top}_w \bm{\Gamma}_{ijk} H_w \!& I_n & \bm{0}\\
				*&*&*&\V_{ijk}\!+\!\V_{ijk}^{\top} & \V^{\top}_{ijk} \\
				*&*&*& * & \bm{X}_{ijk}
			\end{bmatrix} \succcurlyeq 0,}
	\end{equation}
	where, $\bm{r}_{ijk}$, $\mathcal{G}(\W, \N, p^j, \theta^{i})$ are as defined in \eqref{eq:rij}, \eqref{eq:G},
	then, the state feedback controller gain is obtained as $\K^l = \N^l \W^{-1}$ for $l \in \mathbb{I}_{1}^{s}$ which renders the set $\mathcal{S}$ in \eqref{eq:ninvset} robust invariant.  \end{theorem}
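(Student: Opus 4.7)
The plan is to show that the two LMIs~\eqref{eq:Dilated LMI 1}--\eqref{eq:Dilated LMI 2} together imply the nonlinear invariance inequality~\eqref{eq:LMI invariance condition dilated}, after which the remaining part of the proof is the S-procedure chain already derived in the paragraph before the theorem together with Lemma~\ref{lem:equivalence}. More precisely, once \eqref{eq:LMI invariance condition dilated} holds, the quadratic form~\eqref{eq:quad_form} is non-negative for every $\varkappa$, which via~\eqref{eq:relax_ineq_dilated} yields $(e_k^{\top} C \W^{-1} \V_{ijk} \xi_{ijk})^{2}\le 1$ uniformly in $w\in\mathcal{W}$ and $\vv{M}\in\mathcal{M}_T$. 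Since $\V_{ijk}\xi_{ijk}=\W{\theta^{i,j}}^{+}$ by definition of $\xi_{ijk}$, this reads $|e_k^{\top}C{\theta^{i,j}}^{+}|\le 1$ for every $k\in\mathbb{I}_1^{n_c}$, i.e.\ ${\theta^{i,j}}^{+}\in\Theta$. This is exactly statement~$(ii)$ of Lemma~\ref{lem:equivalence}, which lifts it to $\theta^{+}\in\Theta$ for all $\theta\in\Theta$ and all $p\in\mathcal{P}$. Pulling this back through $x=\W\theta$ gives that $\mathcal{S}$ in~\eqref{eq:ninvset} is robustly invariant for~\eqref{eq:Controlled System} under the gain $\K^{l}=\N^{l}\W^{-1}$.

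The real work is therefore the elimination of the nonlinear block $\V_{ijk}^{\top}\mathcal{L}_{ijk}\V_{ijk}$, which I plan to carry out in two Schur-complement moves. First, taking the Schur complement of~\eqref{eq:Dilated LMI 2} with respect to the $(5,5)$ block $\bm{X}_{ijk}$ (assumed strictly positive definite, up to an arbitrarily small regularisation) leaves the first three block-rows/columns unchanged and turns the $(4,4)$ entry into $\V_{ijk}+\V_{ijk}^{\top}-\V_{ijk}^{\top}\bm{X}_{ijk}^{-1}\V_{ijk}$, because the last block-column only contains a $\V_{ijk}^{\top}$ in row~$4$. The resulting matrix coincides with~\eqref{eq:LMI invariance condition dilated} except that $\bm{X}_{ijk}^{-1}$ appears in place of $\mathcal{L}_{ijk}$; hence~\eqref{eq:LMI invariance condition dilated} would follow by Loewner monotonicity provided that $\bm{X}_{ijk}^{-1}\succcurlyeq\mathcal{L}_{ijk}$, since then $\V_{ijk}^{\top}(\bm{X}_{ijk}^{-1}-\mathcal{L}_{ijk})\V_{ijk}\succcurlyeq 0$ is added only in the $(4,4)$ block and the sum remains positive semidefinite.

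Second, the Schur complement of~\eqref{eq:Dilated LMI 1} with respect to the scalar $\bm{\phi}_{ijk}>0$ gives $\W^{\top}+\W-\bm{X}_{ijk}\succcurlyeq\bm{\phi}_{ijk}C^{\top}e_{k}e_{k}^{\top}C$, and the completion-of-squares identity $(\W-\bm{X}_{ijk})^{\top}\bm{X}_{ijk}^{-1}(\W-\bm{X}_{ijk})\succcurlyeq 0$ rearranges into $\W^{\top}\bm{X}_{ijk}^{-1}\W\succcurlyeq\W^{\top}+\W-\bm{X}_{ijk}$. Chaining the two yields $\W^{\top}\bm{X}_{ijk}^{-1}\W\succcurlyeq\bm{\phi}_{ijk}C^{\top}e_{k}e_{k}^{\top}C$, which, after conjugation by $\W^{-1}$, is exactly $\bm{X}_{ijk}^{-1}\succcurlyeq\mathcal{L}_{ijk}$. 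The same chain also delivers $\W^{\top}+\W\succ 0$, so $\W$ is invertible and the gains $\K^{l}=\N^{l}\W^{-1}$ are well defined. I expect this Loewner-dominance step---and in particular the observation that the rank-one, non-invertible structure of $\mathcal{L}_{ijk}$ forces one to route the comparison through $\W^{\top}\bm{X}_{ijk}^{-1}\W$ rather than through a direct inversion of $\mathcal{L}_{ijk}$---to be the only genuinely non-routine element of the proof; everything else is bookkeeping on the S-procedure and the vertex-lifting provided by Lemma~\ref{lem:equivalence}.
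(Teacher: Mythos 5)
Your proposal is correct and follows essentially the same route as the paper: the auxiliary variable $\bm{X}_{ijk}$ with $\bm{X}_{ijk}^{-1}\succcurlyeq\mathcal{L}_{ijk}$, the Schur complement on the $(5,5)$ block of \eqref{eq:Dilated LMI 2}, and the completion-of-squares bound $\W^{\top}\bm{X}_{ijk}^{-1}\W\succcurlyeq\W+\W^{\top}-\bm{X}_{ijk}$ are exactly the paper's ingredients, merely written in the sufficiency direction (LMIs $\Rightarrow$ \eqref{eq:LMI invariance condition dilated}) rather than as a derivation from the nonlinear condition. Your explicit appeal to the S-procedure chain and Lemma~\ref{lem:equivalence} to close the argument matches what the paper leaves implicit.
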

\begin{proof}
	Let us introduce a new matrix variable $\bm{X}_{ijk}= \bm{X}^{\top}_{ijk} \succ 0$ such that
	\begin{equation}\label{eq:Xi Original}
		\bm{X}_{ijk}^{-1}\!-\!\mathcal{L}_{ijk} {\succ} 0 \Leftrightarrow  \bm{X}_{ijk}^{-1}\!-\!\bm{\phi}_{ijk} \W^{-\top}C^{\top}e_{k}e_{k}^{\!\top}C \W^{-1}{\succ} 0, 
	\end{equation} 
	in order to resolve the non-linearity in the block $(4,4)$ of \eqref{eq:LMI invariance condition dilated}.
	By applying Schur complement to \eqref{eq:Xi Original} we obtain,
	\begin{equation}\label{eq:Xi Schur}
		\begin{bmatrix}
			\bm{X}^{-1}_{ijk} & \bm{\phi}_{ijk} \W^{-\top}C^{\top}e_{k} \\
			\bm{\phi}_{ijk} e_{k}^{\!\top}C \W^{-1}  & \bm{\phi}_{ijk}
		\end{bmatrix} {\succ} 0.
	\end{equation}
	Using congruence transformation matrix $\mathrm{diag}\{\W, I_n\}$, LMI in \eqref{eq:Xi Schur} can be rewritten as
	\begin{equation}\label{eq:Xi lmi}
		\begin{bmatrix}
			\W^{\top}\bm{X}^{-1}_{ijk}\W & \bm{\phi}_{ijk} C^{\top}e_{k} \\
			\bm{\phi}_{ijk} e_{k}^{\!\top}C   & \bm{\phi}_{ijk}
		\end{bmatrix} {\succ} 0.
	\end{equation}
	In order to resolve the nonlinear dependence in the $(1,1)$ block of the left hand side matrix in \eqref{eq:Xi lmi}, we use,
	\begin{align}\label{eq:linearization W}
		&\W^{\!\top}\bm{X}^{-1}_{ijk}\W \!\! 
		=\!\! (\W \! \!-\! \!  \bm{X}_{ijk})^{\! \top} \bm{X}^{-1}_{ijk} \! (\W \! -\!   \bm{X}_{ijk}) \! + \! \W \!\!+\!\! \W^{\!\top} \! \!-\! \!  \bm{X}_{ijk} \nonumber \\
		&\succcurlyeq \W + \W^{\top} -  \bm{X}_{ijk}
	\end{align}
	By  substituting  $\W^{\top}\bm{X}^{-1}_{ijk}\W$ in \eqref{eq:Xi lmi} with $\W + \W^{\top} -  \bm{X}_{ijk}$, we obtain a sufficient LMI condition for \eqref{eq:Xi lmi} as given in \eqref{eq:Dilated LMI 1}. Thus, proving the first LMI condition~\eqref{eq:Dilated LMI 1} stated in Theorem~\ref{thm:data-based invariance dilated}.
	
	From \eqref{eq:Xi Original}, the condition \eqref{eq:LMI invariance condition dilated} can be rewritten as
	\begin{equation}\label{eq:invariance condition dilated Xi}
		\scalemath{0.85}{
			\begin{bmatrix}
				\bm{r}_{ijk} & -\!d^{\top}\bm{\Lambda}_{ijk}Z & \bm{0}  & \bm{0}\\ 
				* &  Z^{\top}\bm{\Lambda}_{ijk}Z   &\bm{0} & \mathcal{G}^{\!\top}\left(\!\W, \N, p^j, \theta^{i}\!\right)\\ 
				* & * & H^{\top}_w \bm{\Gamma}_{ijk} H_w \!& I_n\\
				*&*&*&\V_{ijk}\!+\!\V_{ijk}^{\!\top}\!\!-\!\!\V_{ijk}^{\!\top}\bm{X}^{-1}_{ijk} \V_{ijk}
			\end{bmatrix} \!\!\succcurlyeq \!\!0,}
	\end{equation} 
	which, after applying the Schur complement, we obtain  the second LMI condition \eqref{eq:Dilated LMI 2}. 
	% \begin{equation}
		% \scalemath{0.85}{
			%     \begin{bmatrix}
				%   \bm{r}_{ijk} & -d^{\top}\bm{\Lambda}_{ijk}Z & \bm{0}  & \bm{0} & \bm{0}\\ 
				% * &  Z^{\top}\bm{\Lambda}_{ijk}Z   &\bm{0} & \mathcal{G}^{\top}\left(\W, \N, p^j, \theta^{i}\right) & \bm{0} \\ 
				% * & * & H^{\top}_w \bm{\Gamma}_{ijk} H_w \!& I_n & \bm{0}\\
				% *&*&*&\V_{ijk}\!+\!\V_{ijk}^{\top} & \V^{\top}_{ijk} \\
				% *&*&*& * & \bm{X}_{ijk}
				% \end{bmatrix} \! \succcurlyeq \! 0,}
		% \end{equation} 
\end{proof}

\section{SDP program for volume maximization of the RCI set}\label{sec:SDP}
The volume of the polytopic set $\mathcal{S}$ in \eqref{eq:invset1} is proportional to the determinant $|\mathrm{det}(\W)|$~\cite{ag20}, for a given $C$. %Thus, in order to maximize the size of $\mathcal{S}$, we formulate a determinant
%maximization problem, under state-input constraints \eqref{eq:Tractable State Constraints}, \eqref{eq:Tractable Control Input Constraints}, and invariance conditions \eqref{eq:Dilated LMI 1}-\eqref{eq:Dilated LMI 2}. 
We  present an \emph{iterative} determinant
maximization algorithm to maximize the size of $\mathcal{S}$. At the $q$-th iteration, let 
$W^q$ and ${X}^q_{ij}$ be the values of the variables $\W$,  $\bm{X}_{ij}$.  
At each subsequent iteration, the volume of the RCI set increases, \emph{i.e.}, $|\mathrm{det}(W^{q+1})| \geq |\mathrm{det}(W^{q})|$, if  the following LMI condition is imposed,
\begin{equation}\label{eq:Wobj}
	\W^{\top} W^{q} + (W^{q})^{\top}\W -(W^{q})^{\top}W^{q} \succcurlyeq \W_{\mathrm{obj}} \succ 0,
\end{equation}
where $\W_{\mathrm{obj}} = \W^{\top}_{\mathrm{obj}} \in \mathbb{R}^{n \times n}$ is the new symmetric matrix variable. 
Moreover, the non-linearity \eqref{eq:linearization W} can be written as,
\begin{align}\label{eq:linearization W new}
	\!\! \! \W^{\top}\bm{X}^{-1}_{ijk}\W \! \succcurlyeq \!
	\W^{\top} Z^{q}_{ijk} \!+\! {Z^{q}_{ijk}}^{\top} \W 
	\!\!-\!\! {Z^{q}_{ijk}}^{\!\!\top} \bm{X}_{ijk} Z^{q}_{ijk},
\end{align}
where $Z^{q}_{ijk} \triangleq ({X}^q_{ijk})^{-1} W^{q}$. At the $q$-th iteration, the $(1,1)$-block in \eqref{eq:Dilated LMI 1} is substituted with the right hand side of \eqref{eq:linearization W new},  %as~\cite{ag20},
\begin{equation}\label{eq:Dilated LMI 1 new}
	\scalemath{0.9}{ \begin{bmatrix}
			\W^{\top} Z^{q}_{ijk} + {Z^{q}_{ijk}}^{\top}\W - {Z^{q}_{ijk}}^{\top} \bm{X}_{ijk} Z^{q}_{ijk} & \bm{\phi}_{ijk} C^{\top}e_{k} \\
			\bm{\phi}_{ijk} e_{k}^{\!\top}C   & \bm{\phi}_{ijk}
		\end{bmatrix} {\succcurlyeq} 0}.
\end{equation}
%For brevity, we omit the detailed proof of the iterative algorithm. 
%The reader is referred to~\cite{gkf17} for the  detailed proof.
Thus, we solve the following SDP program at each iteration:
\textbf{Algorithm 1: $q$-th iteration:}
\begin{equation}\label{eq:iterative SDP}
	\begin{array}{lll}
		\max & \mathrm{log}\,\mathrm{det}(\W_{\mathrm{obj}}) & \\
		{\bm{Z}_{\mathrm{SDP}}} & \\
		\text{subject to:} & \eqref{eq:Wobj}, &\\
		& 
		(\ref{eq:Tractable State Constraints}),  (\ref{eq:Tractable Control Input Constraints}), &  (\text{state-input constraints}) \\
		&  \eqref{eq:Dilated LMI 2}, \ \eqref{eq:Dilated LMI 1 new}, \;  & (\text{invariance LMIs})  
	\end{array}
\end{equation}
where the optimization variables are $\bm{Z}_{\mathrm{SDP}} \triangleq \left( \W,\N,\bm{X}_{ijk},\V_{ijk},\bm{\phi}_{ijk},\bm{\Lambda}_{ijk}, \bm{\Gamma}_{ijk}, \W_{\mathrm{obj}} \right)$, for  $k \in \mathbb{I}_{1}^{n_c}$,  $i \in \mathbb{I}_{1}^{2\sigma}$, $j \in \mathbb{I}_{1}^{v_p}$.
The SDP \eqref{eq:iterative SDP} consists of 
linear state-input constraints which are identified by $2\sigma  n_x$, $2\sigma  n_u  v_p$ scalar inequalities. The invariance conditions~\eqref{eq:Dilated LMI 2},  \eqref{eq:Dilated LMI 1 new} consists of $2n_c  \sigma  v_p$ LMI constraints, each LMI has $(1+(ns+m)n+3n)$, and $(n+1)$ number of rows, respectively. 
All constraints are in terms of $3n^2 + mns + (T+1)n_w+ 1$ optimization variables. 
The number of variables scale quadratically in the state dimension $n$,  thus, the approach can be computationally expensive for systems having a large state dimension.

\section{Numerical example}\label{sec:example}
The effectiveness of the proposed method is shown via a numerical example. The  algorithm is implemented in  \texttt{Python} with \texttt{cvxpy}~\cite{diamond2016cvxpy} and  \texttt{MOSEK}~\cite{mosek}   to solve the SDP program.

\subsection*{Open-loop unstable LPV data-generating system}
We consider parameter-varying double integrator system:
\begin{align}\label{eq:example}
	x_{k+1}=\begin{bmatrix} 1+\delta_k & 1+\delta_k \\ 0 & 1+\delta_k \end{bmatrix} x_k+\begin{bmatrix} 1 \\ 1 \end{bmatrix} u_k+w_k,
\end{align}
where $|\delta_k| \leq 0.2$, with the following state-input constraints,
\begin{align*}
	\begin{bmatrix}
		0.10 \! & \! -0.10 \! &\!  0.10 \! & \! -0.10 \! &\!  0 \! & \! 0 \\
		0.15 \! & \! -0.10 \! &\!  -0.15\! &\!  0.15\!  &\!  0.25\!  &\!  -\frac{1}{6}
	\end{bmatrix}^{\! \top}\! x \leq \mathbf{1},   |u| \! \leq \!  3,
\end{align*}
The disturbance is assumed to lie in a set $\mathcal{W}=\{w:|w| \leq 0.1\}$. The system can be written in the LPV form~\eqref{eq:system} with 
\begin{align*}
	A^1=\begin{bmatrix} 1.2 & 1.2 \\ 0 & 1.2 \end{bmatrix}, A^2=\begin{bmatrix} 0.8 & 0.8 \\ 0 & 0.8 \end{bmatrix},  B=\begin{bmatrix} 1 \\ 1 \end{bmatrix}
\end{align*}
defining  $p_{k,1}=2.5(0.2+\delta_k)$, $p_{k,2}=2.5(0.2-\delta_k)$.  This corresponds to the simplex scheduling parameter set $\mathcal{P} = \{p \in \mathbb{R}^2:p \in [0,1], p_1+p_2=1\} = \mathsf{conv}(\smallmat{1\\0}, \smallmat{0\\1})$.
The system matrices $\{A^1, A^2, B\}$ are \emph{unknown} and only used to gather the data. 
A single state-input-scheduling trajectory of $T=20$ samples is gathered by exciting the system~\eqref{eq:example} with inputs uniformly distributed in $ [-3, 3]$. The data satisfies the rank condition in Proposition~\ref{prop:rich_data}, \emph{i.e}, $\mathrm{rank}(X^p_u) = 5$. 
We select the matrix $C$ to
define a regular polytope of desired complexity by assuming $\W = I$. 
The maximum volume RCI set and the associated PD
state-feedback gain matrices are computed with \textbf{Algorithm 1} solving~\eqref{eq:iterative SDP} iteratively for $5$ iterations with  data-based LMI conditions.
The obtained invariant set matrix $\W$ and feedback control gains $\{ \K^1, \K^2\}$  for the set complexity $n_c=2,3$ are :
\begin{align*}%\label{rcisets_description}
	&C_2=\begin{bmatrix}
		1   &  0\\
		0   &  1
	\end{bmatrix},\quad \quad  \left [ \begin{array}{cc} 
		\W_2\\\hline
		\K^{1}_2 \\ \hline
		\K^{2}_2
	\end{array} \right ]=\left [ \begin{array}{cc} 
		\!\!\;\;6.02& -0.79\!\!\\
		\!\!\;\;0.02&  2.15\!\!\\\hline
		\!\!   -0.11 & -0.73\!\!\\ \hline
		-0.18 & -0.94
	\end{array} \right ], \nonumber \\
	&C_3\!=\!\!\begin{bmatrix}
		\!\;\;20\!\!  & \!\! \;\;20\!\\
		\! -20 \!\!&  \!\!  \;\;0\!\\
		\!\;\;0 \!\! &\!\! -25\!
	\end{bmatrix},\left [ \begin{array}{cc} 
		\W_3\\\hline
		\K^{1}_3 \\ \hline
		\K^{2}_3
	\end{array} \right ]\!=\!\left [ \begin{array}{cc} 
		115.82 & 44.77\\
		-14.83 &  \;\;81.46\\\hline
		-0.27 & -0.68\\ \hline
		-0.31 & -0.67\\
	\end{array} \right ]
\end{align*}
%where the subscripts in \eqref{rcisets_description} indicate the set complexity $n_c$. 

\begin{figure}[t!]
	\centering
	\begin{subfigure}[h]{0.47\columnwidth}
		\centering
		\includegraphics[width= 1\columnwidth]{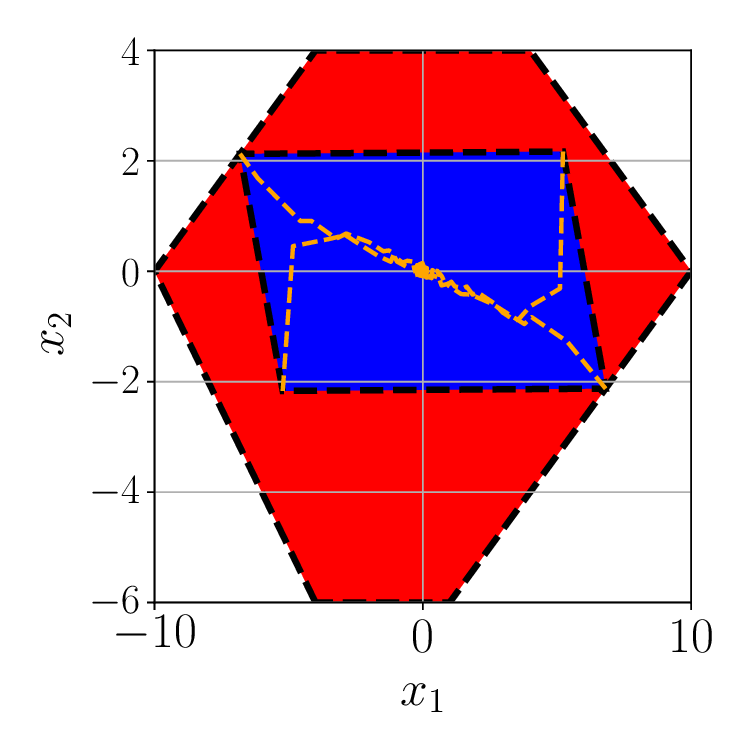}
		\vspace{-0.5cm}
		\caption{$n_c=2$ }
	\end{subfigure} 
	\begin{subfigure}[h]{0.47\columnwidth}
		\centering 
		\includegraphics[width= 1\columnwidth]{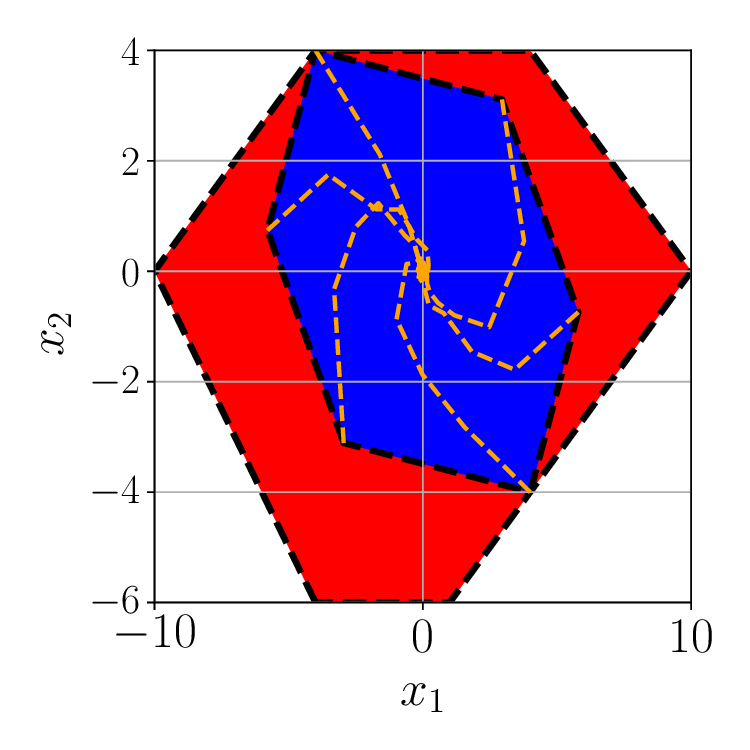}
		\vspace{-0.5cm}
		\caption{$n_c=3$ }
	\end{subfigure}
	\vspace{-0.1cm}
	\caption{Maximum volume RCI sets $\mathcal{S}$ (blue), closed-loop simulated trajectories (orange) and state constraints set (red).}
	\label{fig:rciset_P}
\end{figure}
\begin{figure}[t!]
	\centering
	\includegraphics[width=0.65\columnwidth]{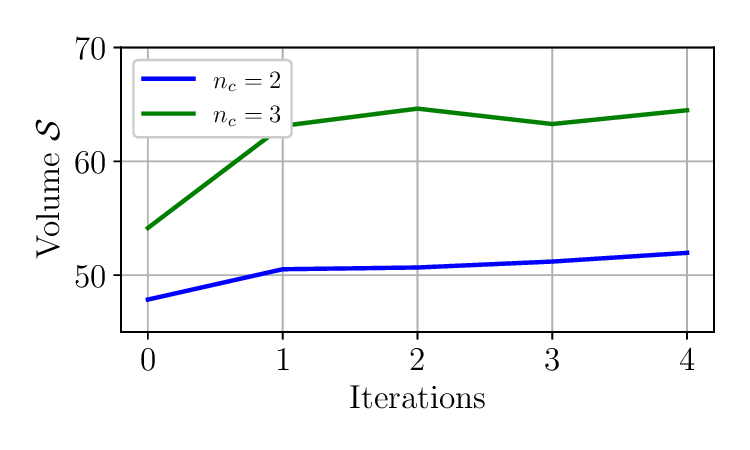}
	\vspace{-0.2cm}
	\caption{Volume of $\mathcal{S}$ \emph{vs} iterations of \textbf{Algorithm 1}}
	\label{fig:volume_iter}
\end{figure}
\begin{figure}[t!]
	\centering
	\includegraphics[width=0.65\columnwidth]{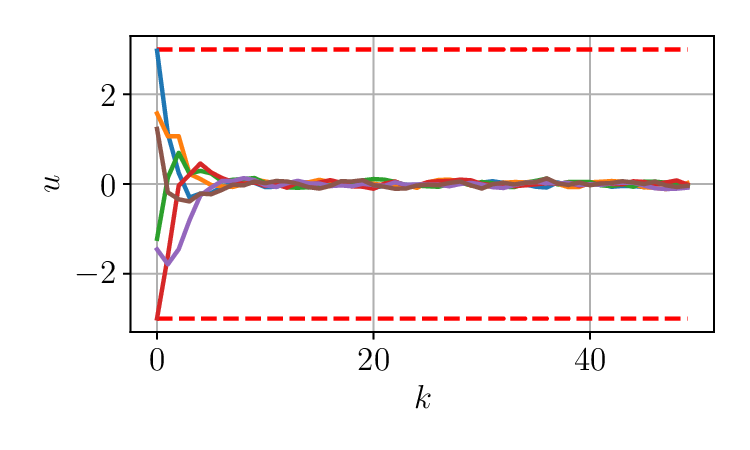}
	\vspace{-0.2cm}
	\caption{Control input $u = \K(p) x$ trajectories for the computed state-feedback gain. Input constraints (dashed-red).}
	\label{fig:input constrains}
\end{figure}
In Fig.~\ref{fig:rciset_P}, the corresponding RCI sets  are plotted. 
The  volume of the obtained RCI sets $\mathcal{S}$ are $53.14 \ (n_c=2)$ and $64.86 \ (n_c=3)$. The volume of the RCI sets w.r.t. iterations of \textbf{Algorithm 1} is shown in Fig.~\ref{fig:volume_iter}. It can be observed that as $n_c$ increases, size of the RCI set increases, thus $C$ can be used as a tuning parameter to achieve a trade-off between complexity \emph{vs} size of the set.
Fig.~\ref{fig:rciset_P} also depicts closed-loop state trajectories starting from each vertex of the RCI set and corresponding control input trajectories are shown in Fig.~\ref{fig:input constrains}. The state trajectories are obtained by simulating the true system~\eqref{eq:example} in closed-loop with the obtained state-feedback controller $u=\K(p) x$ with each vertex of the RCI set as the initial condition. Note that for each closed-loop simulation, a different realization of the scheduling signal  $p \in [0, 1]$ is generated,  %These scheduling signals are \emph{not} used in the  data employed to compute the RCI set.  
as well as   a different realization of the disturbance signal $w$ in the given bound is acting on the system at each time instance. The result shows that the approach guarantees robust invariance w.r.t. all possible scheduling signals taking values in a given set as well as in the presence of a bounded but unknown disturbance,  while respecting the state-constraints.  %The corresponding input trajectories computed with the iterative data-driven algorithm are shown in 
Fig.~\ref{fig:input constrains} shows that the input constraints are also satisfied.
\begin{table}[thb!]
	\centering
	\begin{tabular}{ |c|c|c|c|c| } 
		\hline
		$T$ &  $20$ & $50$ & $100$ & $200$  \\ 
		\hline
		$\mathrm{volume}(\mathcal{S})$ & 64.86 & 67.04 & 68.55 & 68.77 \\
		\hline
	\end{tabular}
	\caption{Volume of $\mathcal{S}$  \emph{vs} number of data samples $T$.}\label{tab:vol_vs_T}
\end{table}
Finally, we analyse the effect of number of data samples $T$ on the size of the RCI set $\mathcal{S}$. The results are reported in Table~\ref{tab:vol_vs_T}, which show that as $T$ is increases, volume of the  RCI sets increases. This is due to the fact that the feasible model set $\mathcal{M}_{T}$ shrinks progressively $\mathcal{M}_{T+1} \subseteq \mathcal{M}_{T}$, as $T$ increases.
\section{Conclusion}
We proposed a direct data-driven method  to compute a polytopic RCI set and gain scheduled state-feedback control law for LPV system. Novel data-based sufficient invariance conditions are proposed which utilize a single state-input-scheduling trajectory without requiring to identify an LPV model of the system. The effectiveness of the proposed algorithm is shown via a numerical example to generate RCI set from a small number of collected data samples. As a future work, we propose to extend the present approach for synthesizing parameter-dependent RCI sets for LPV systems.

\section{Appendix: Proof of Lemma~\ref{lem:equivalence}} \label{sec:appendix}
\begin{proof}
	Since for each vertex $\theta^{i}, \ i \in \mathbb{I}_{1}^{2\sigma}$, and  $p^{j}, \ j \in \mathbb{I}_{1}^{v_p}$,  we have  $\theta^{i} \in \Theta, p^j \in \mathcal{P}$,  it holds that $(i) \Rightarrow (ii)$. Now we will prove  $(ii) \Rightarrow (i)$.  Any given $\theta \in \Theta$ and $p \in \mathcal{P}$ can be expressed as a convex combination of the vertices of the respective sets,
	$\theta = \sum_{i=1}^{2\sigma} \alpha_i \theta^{i}, \quad \sum_{i=1}^{2\sigma} \alpha_i = 1, \quad \alpha_i \geq 0$ and $p = \sum_{j=1}^{v_p} \beta_j p^j, \quad \sum_{j=1}^{v_p} \beta_j=1, \quad \beta_j \geq 0$
	%  \begin{align*}
		%      &\theta = \sum_{i=1}^{2\sigma} \alpha_i \theta^{i}, \quad \sum_{i=1}^{2\sigma} \alpha_i = 1, \quad \alpha_i \geq 0, \\
		% &p = \sum_{j=1}^{v_p} \beta_j p^j, \quad \sum_{j=1}^{v_p} \beta_j=1, \quad \beta_j \geq 0.     
		% \end{align*}
	Then, based on the closed-loop dynamics \eqref{eq:closeloop} we get,
	\begin{align}
		&\theta^{+} \!=\! \W^{-1}  M\begin{bmatrix}
			\bar{\W} \\ 
			\N 
		\end{bmatrix} \left[\sum_{j=1}^{v_p} \beta_j p^j \otimes  \sum_{i=1}^{2\sigma}\alpha_i \theta^i \right]  + \W^{-1}w \label{eq:cmb0} \\
		& = \! \sum_{j=1}^{v_p} \! \beta_j \! \sum_{i=1}^{2\sigma} \! \alpha_i \underbrace{ \left(\W^{-1} M \begin{bmatrix}
				\bar{\W} \\ 
				\N 
			\end{bmatrix} \left[p^j \otimes   \theta^i \right]  + \W^{-1}w \right)}_{{\theta^{i,j}}^{+} \in \Theta} \label{eq:cmb1} \\
		&=\sum_{j=1}^{v_p} \beta_j  \underbrace{ \sum_{i=1}^{2\sigma} \alpha_i {\theta^{i,j}}^{+}}_{{\theta^{j}}^{+} \in \Theta} =\sum_{j=1}^{v_p} \beta_j  {\theta^{j}}^{+} \in \Theta, \label{eq:cmb3}
	\end{align}
	where \eqref{eq:cmb1} follows from the distributive property of the Kronecker product. 
	We know that ${\theta^{i,j}}^{+} \in \Theta$ according to \eqref{eq:inv_cond_vertices}. As ${\theta^{j}}^{+}$ in \eqref{eq:cmb3} is  a convex combination of ${\theta^{i,j}}^{+}$ and  the set $\Theta$ is convex, we obtain ${\theta^j}^{+} \in \Theta $. Similarly, as $\theta^{+}$  is  a convex combination of ${\theta^{j}}^{+} \in \Theta$ and  the set $\Theta$ is convex, we get $\theta^{+} \in \Theta$, thus proving $(ii) \Rightarrow (i)$.
\end{proof}

\bibliographystyle{plain}
\bibliography{lcss23_ref}
\end{document}